\newcommand\version{September 6, 2011}
\newtheorem{theorem}{Theorem}[section]
\newtheorem{lemma}[theorem]{Lemma}
\theoremstyle{definition}
\theoremstyle{remark}
\numberwithin{equation}{section}
\renewcommand{\epsilon}{\varepsilon}
\newcommand{\N}{\mathbb{N}}
\renewcommand{\phi}{\varphi}
\newcommand{\R}{\mathbb{R}}
\newcommand{\Z}{\mathbb{Z}}
\DeclareMathOperator{\ess}{ess-sup}
\DeclareMathOperator{\tr}{Tr}
\begin{document}

\title[Entropy and the uncertainty principle ---
\version]{Entropy and the uncertainty principle}

\author{Rupert L. Frank}
\address{Rupert L. Frank, Department of Mathematics,
Princeton University, Washington Road, Princeton, NJ 08544, USA}
\email{rlfrank@math.princeton.edu}

\author{Elliott H. Lieb}
\address{Elliott H. Lieb, Departments of Mathematics and Physics,
Princeton University, P.~O.~Box 708, Princeton, NJ 08544,
USA}
\email{lieb@princeton.edu}

\thanks{\copyright\, 2011 by the authors. This paper may be
reproduced, in its entirety, for non-commercial purposes.\\
Work partially supported by NSF grants PHY--1068285
(R.L.F.) and PHY--0965859 (E.H.L.)}

\begin{abstract}
We generalize, improve and unify theorems
of Rumin, and Maassen--Uffink about classical entropies
associated to quantum density matrices. These theorems refer
to the classical entropies of the diagonals of a density
matrix in two different bases. Thus they provide a kind of
uncertainty principle. Our inequalities are sharp because
they are exact in the high-temperature or semi-classical
limit.
\end{abstract}

\maketitle

\section{Introduction}

The von Neumann entropy of a quantum state (density matrix)
can be calculated either in momentum space or in
configuration space and the two are equal. They can even be
zero. Nevertheless, the corresponding classical entropies,
determined by the diagonals of the two representations of
the density matrix, can be different, and they can even be
negative, but their sum cannot be arbitrarily small. This
sum of the classical entropies can thus serve as a measure
of the quantum mechanical uncertainty principle.

This point of view was advocated by Deutsch \cite{De}, who,
among other things, proved a lower bound on this sum, which
was later improved by Maassen and Uffink \cite{MaUf},
following a conjecture of Kraus \cite{Kr}. These
inequalities were obtained for a general pair of bases, not
just momentum and configuration space. In the
momentum--configuration basis an improvement on these
previous inequalities was made by Rumin \cite{Ru}, who was
able to add a term to the inequality involving the largest
eigenvalue of the density matrix. He raised the question
whether this additional term could be further improved by
using a larger quantity, namely, the von Neumann entropy of
the density matrix. In this paper we prove that this
surmise is correct.

We prove even more by combining the Maaseen-Uffink
investigation with the Rumin surmise. Rumin was concerned
with the momentum--configuration space duality, whereas
Maassen-Uffink were concerned with arbitrary pairs of bases of
the Hilbert space. For this they introduced a parameter $c$
which somehow quantifies the disparity between the two
bases. As one might expect, the $k,x$ pair has the largest
$c$-value, i.e., $c=1$. We show how our theorem applies to
any pair with the corresponding $c$-dependent improvement
found in \cite{MaUf}.

Our theorem and simple proof are supported by a
semi-classical intuition, as evidenced by our use of the
Golden-Thompson inequality. The only other ingredient in
our proof is the Gibbs variational principle. Because our
constant in Theorem~\ref{main} agrees with the
semi-classical limit it is the best possible.


\section{Rumin's conjecture and its generalizations}

For any trace class operator $\gamma\geq 0$ on $L^2(\R^d)$ we denote by
$\rho_\gamma(x)=\gamma(x,x)$ its density; see \eqref{eq:density} for a precise
definition. Moreover,
$$
\hat\gamma(k,k') = \iint_{\R^d\times\R^d} e^{2\pi i(k\cdot
x-k'\cdot x')} \gamma(x,x')\,dx\,dx'
$$
and
$$
\rho_{\widehat\gamma}(k) = \hat\gamma(k,k) = 
\iint_{\R^d\times\R^d} e^{2\pi i k\cdot (x-x')} \gamma(x,x')
\,dx\,dx'\,.
$$
We note that if $\tr\gamma=1$, then
\begin{equation*}
 \int_{\R^d} \rho_\gamma(x) \,dx 
= \int_{\R^d} \rho_{\widehat\gamma}(k) \,dk = 1 \,.
\end{equation*}
Our main result is

\begin{theorem}\label{main}
 For any $\gamma\geq 0$ with $\tr\gamma=1$ and
$$
\int_{\R^d} \rho_\gamma(x) \ln_+ \rho_\gamma(x) \,dx
<\infty
\quad\text{and}\quad
\int_{\R^d} \rho_{\widehat\gamma}(k)
\ln_+\rho_{\widehat\gamma}(k)
\,dk < \infty \,,
$$
where $\ln_+\rho=\max\{\ln\rho,0\}$, one has
\begin{equation}
 \label{eq:main}
- \int_{\R^d} \rho_\gamma(x)\ln\rho_\gamma(x) \,dx 
- \int_{\R^d}
\rho_{\widehat\gamma}(k)\ln\rho_{\widehat\gamma}(k)
\,dk \geq -\tr\gamma\ln\gamma \,.
\end{equation}
\end{theorem}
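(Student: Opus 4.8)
\emph{Strategy.} The plan is to combine the Gibbs variational principle with the Golden--Thompson inequality, the trial Hamiltonian being chosen so that its expectation in the state $\gamma$ reproduces the left-hand side of \eqref{eq:main}. Recall that for any self-adjoint operator $H$ that is bounded below with $\tr e^{-H}<\infty$, the Gibbs variational principle gives
\begin{equation*}
-\tr\gamma\ln\gamma \leq \tr(\gamma H) + \ln\tr e^{-H}
\end{equation*}
for every density matrix $\gamma$, with equality precisely for the Gibbs state $\gamma=e^{-H}/\tr e^{-H}$. This already suggests the shape of the argument: if I can find an $H$ with $\tr(\gamma H)$ equal to the left side of \eqref{eq:main} and with $\tr e^{-H}\leq 1$, the theorem follows at once.

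The choice that does this is $H=V+W$, where $V$ is multiplication by $-\ln\rho_\gamma(x)$ and $W$ is the Fourier multiplier with symbol $-\ln\rho_{\widehat\gamma}(k)$, i.e.\ $\widehat{W\psi}(k)=-\ln\rho_{\widehat\gamma}(k)\,\widehat\psi(k)$. Since the configuration-space diagonal of $\gamma$ is $\rho_\gamma$ and its momentum-space diagonal is $\rho_{\widehat\gamma}$, one reads off
\begin{equation*}
\tr(\gamma V)=-\int_{\R^d}\rho_\gamma(x)\ln\rho_\gamma(x)\,dx,
\qquad
\tr(\gamma W)=-\int_{\R^d}\rho_{\widehat\gamma}(k)\ln\rho_{\widehat\gamma}(k)\,dk,
\end{equation*}
so that $\tr(\gamma H)$ is exactly the left side of \eqref{eq:main}. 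It therefore remains only to prove $\tr e^{-H}\leq 1$.

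For this I would invoke Golden--Thompson, $\tr e^{-(V+W)}\leq \tr(e^{-V}e^{-W})$. Here $e^{-V}$ is multiplication by $\rho_\gamma(x)$ and $e^{-W}$ is the Fourier multiplier with symbol $\rho_{\widehat\gamma}(k)$, so the operator $e^{-V}e^{-W}$ has integral kernel $\rho_\gamma(x)\,\check q(x-y)$, where $\check q(z)=\int_{\R^d}\rho_{\widehat\gamma}(k)\,e^{2\pi i k\cdot z}\,dk$. Computing the trace by integrating the kernel along the diagonal gives
\begin{equation*}
\tr\bigl(e^{-V}e^{-W}\bigr)=\check q(0)\int_{\R^d}\rho_\gamma(x)\,dx=\Bigl(\int_{\R^d}\rho_{\widehat\gamma}(k)\,dk\Bigr)\Bigl(\int_{\R^d}\rho_\gamma(x)\,dx\Bigr)=1,
\end{equation*}
where both factors equal $1$ because $\tr\gamma=1$. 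Combining the two displays yields $-\tr\gamma\ln\gamma\leq\tr(\gamma H)$, which is \eqref{eq:main}.

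The step I expect to require the most care is making this rigorous when $\rho_\gamma$ or $\rho_{\widehat\gamma}$ vanishes on a set of positive measure: there $-\ln\rho=+\infty$, so $V$ and $W$ are not only unbounded but genuinely singular, and neither the Gibbs principle nor Golden--Thompson applies to $H=V+W$ verbatim. I would circumvent this by regularizing---replacing $\rho_\gamma$ by $\rho_\gamma+\epsilon$ and truncating above and below (and likewise for $\rho_{\widehat\gamma}$) to obtain bounded symbols, for which both inequalities are standard, and then removing the cutoffs. The hypotheses $\int\rho_\gamma\ln_+\rho_\gamma<\infty$ and $\int\rho_{\widehat\gamma}\ln_+\rho_{\widehat\gamma}<\infty$ are exactly what is needed to control the limit of $\tr(\gamma H_\epsilon)$ by dominated/monotone convergence, and to guarantee that $e^{-V}e^{-W}$ is honestly trace class so that the kernel computation of its trace is legitimate.
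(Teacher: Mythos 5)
Your proposal is correct and follows essentially the same route as the paper: the trial Hamiltonian $H=-\ln\rho_\gamma-\mathcal U^*\ln\rho_{\widehat\gamma}\,\mathcal U$, the Gibbs variational principle, Golden--Thompson, and the observation that the resulting trace equals $\int\rho_{\widehat\gamma}\cdot\int\rho_\gamma=1$. The only cosmetic difference is that the paper uses the symmetrized form $\tr e^{A/2}e^Be^{A/2}$ and identifies the trace as the squared Hilbert--Schmidt norm of $\rho_{\widehat\gamma}^{1/2}\,\mathcal U\rho_\gamma^{1/2}$, which makes the diagonal-kernel evaluation of the trace immediately legitimate.
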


\noindent\emph{Remarks.}
(1) While the entropy on the right side of \eqref{eq:main} is necessarily
non-negative, those on the left side can have either sign.\\
(2) Inequality \eqref{eq:main} is saturated in the semi-classical
limit. This can be verified by taking $\gamma=Z_\beta^{-1}
\exp(-\beta(-\Delta+x^2))$ and letting $\beta\to 0$; see
\cite{Ru}.\\
(3) For $\gamma$ of rank one, this is Hirschman's
inequality \cite{Hi}. This was improved by Beckner
\cite{Be}. However, because of (1) this improvement is not
possible if one allows for mixed states (i.e., $\gamma$ of
higher rank).\\
(4) The inequality for $\gamma$ equal to a multiple of a
projection was proved in \cite{Ru}.  More generally, Rumin proves \eqref{eq:main} with $\ln \|\gamma\|_\infty$ instead of $\tr\gamma\ln\gamma$.\\ 
(5) The inequality shares the following tensorization
property: If $d=n+m$, we can think of $L^2(\R^d)$ as
$L^2(\R^n)\otimes L^2(\R^m)$. Then the main inequality for
$\gamma=\gamma_n\otimes\gamma_m$ equals the sum
of the inequalities for $\gamma_n$ and $\gamma_m$.\\
(6) If, instead, we define the Fourier transform by
$$
\widetilde\gamma(p,q) = \iint e^{i(p\cdot x-q\cdot y)}
\gamma(x,y) \,\frac{dx\,dy}{(2\pi)^d} \,
\quad \text{and}\quad
\rho_{\widetilde\gamma}(p) = 
\iint e^{i p\cdot (x-y)} \gamma(x,y)
\frac{dx\,dy}{(2\pi)^d} \,,
$$
then \eqref{eq:main} becomes
$$
- \int \rho_\gamma(x)\ln\rho_\gamma(x) \,dx - \int
\rho_{\widetilde\gamma}(p)\ln\rho_{\widetilde\gamma}(p) \,dp
\geq -\tr\gamma\ln\gamma +  d\ln(2\pi) \,.
$$

\medskip

Theorem \ref{main} is a special case of a more general
Theorem \ref{main3} below. We listed Theorem~\ref{main}
separately because it was the starting point of our
investigation and was conjectured by Rumin.

The more general theorem includes the discrete case as well
as the continuous case in Theorem \ref{main}. It is not
entirely a triviality that the discrete and continuous
cases are contained in one theorem because, as is well
known, many entropy inequalities are true in one case and
not in the other. For example, the discrete entropy is
always positive while the continuous entropy can be, and
often is, negative.

The general set-up consists of two sigma-finite measure
spaces $(X,\mu)$ and $(Y,\nu)$. We denote by $L^2(X)$ and
$L^2(Y)$ the corresponding spaces of square-integrable
functions. Let $\gamma$ be a non-negative operator on
$L^2(X)$ with
$\tr\gamma=1$. Then we have $\gamma=\sum_j \lambda_j
|f_j\rangle\langle f_j|$
with orthonormal functions $(f_j)$ and numbers
$\lambda_j\in[0,1]$ satisfying
$\sum_j \lambda_j = 1$. We define the density $\rho_\gamma$
of $\gamma$, a function on $X$, by
\begin{equation}
 \label{eq:density}
\rho_\gamma(x) = \sum_j \lambda_j |f_j(x)|^2 \,.
\end{equation}
By monotone convergence, we have
\begin{equation}
 \label{eq:norm}
\int_X \rho_\gamma(x) \,d\mu(x) = \sum_j \lambda_j =
\tr\gamma =1 \,.
\end{equation}

Assume now that there is a unitary operator $\mathcal U:
L^2(X)\to L^2(Y)$. For $\gamma$ as before,
we define an operator $\hat\gamma$ on $L^2(Y)$ by
$$
\hat\gamma = \mathcal U \, \gamma \, \mathcal U^* \,.
$$
This operator is non-negative and has $\tr\hat\gamma=1$. Its
density $\rho_{\widehat\gamma}$ is defined similarly to
that of $\rho_\gamma$, namely,
$$
\rho_{\widehat\gamma}(y) = \sum_j \lambda_j |g_j(y)|^2 \,,
$$
where $\hat\gamma=\sum_j \lambda_j |g_j\rangle\langle g_j|$
and $g_j=\mathcal U f_j$. As in \eqref{eq:norm},
\begin{equation}
 \label{eq:normy}
\int_Y \rho_{\widehat\gamma}(y) \,d\nu(y) = 1 \,.
\end{equation}

Our final assumption is that $\mathcal U$ is bounded from
$L^1(X)$ to
$L^\infty(Y)$. This property guarantees that $\mathcal U$
has an integral kernel $\mathcal U(y,x)$ with
$$
\infty>\!\|\mathcal U\|_{L^1\to L^\infty}\! = \ess_{x,y}
\!|\,\mathcal U(y,x)| := \sup\left\{ t: (\mu\times\nu)(\{
(x,y) : |\,\mathcal U(x,y)|>t \}) >0 \right\}. 
$$

\begin{theorem}\label{main3}
Under the above assumptions, let $\gamma\geq 0$ be an
operator in $L^2(X)$ with
$\tr\gamma=1$ and such that
$$
\int_X \rho_\gamma(x) \ln_+ \rho_\gamma(x) \,d\mu(x) <\infty
\quad\text{and}\quad
\int_Y \rho_{\widehat\gamma}(y)
\ln_+\rho_{\widehat\gamma}(y)
\,d\nu(y) < \infty \,,
$$
where $\ln_+\rho=\max\{\ln\rho,0\}$. Then
\begin{equation}
 \label{eq:main3}
-\int_X \rho_\gamma(x) \ln\rho_\gamma(x) \,d\mu(x) 
-\int_Y \rho_{\widehat\gamma}(y) \ln\rho_{\widehat\gamma}(y)
\,d\nu(y) 
\geq - \tr\gamma\ln\gamma -2\ln\|\mathcal U\|_{L^1\to
L^\infty} \,.
\end{equation}
\end{theorem}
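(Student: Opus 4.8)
The plan is to combine the Gibbs variational principle with the Golden--Thompson inequality, choosing as a ``trial Hamiltonian'' a sum of two multiplication operators, one diagonal in each of the two representations. Concretely, for real functions $u$ on $X$ and $v$ on $Y$ (to be fixed at the very end), let $M_u$ denote multiplication by $u$ on $L^2(X)$ and $M_v$ multiplication by $v$ on $L^2(Y)$, and consider the self-adjoint operator
\begin{equation*}
H = M_u + \mathcal U^* M_v\, \mathcal U
\end{equation*}
on $L^2(X)$. The Gibbs variational principle, in the form $-\tr\gamma\ln\gamma \leq \tr(\gamma H) + \ln\tr e^{-H}$, then bounds the von Neumann entropy by two ``energy'' terms plus a log-partition function. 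The first energy term is $\tr(\gamma M_u) = \int_X u\,\rho_\gamma\,d\mu$ by \eqref{eq:density}, while $\tr(\gamma\,\mathcal U^* M_v \mathcal U) = \tr(\hat\gamma M_v) = \int_Y v\,\rho_{\widehat\gamma}\,d\nu$ since $\hat\gamma=\mathcal U\gamma\mathcal U^*$. Thus the whole problem is reduced to controlling $\ln\tr e^{-H}$.

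To estimate the partition function I would invoke Golden--Thompson, which converts the exponential of the sum into a product; using $e^{-\mathcal U^* M_v \mathcal U} = \mathcal U^* e^{-M_v}\mathcal U$ (unitarity of $\mathcal U$) this gives
\begin{equation*}
\tr e^{-H} \leq \tr\!\left( e^{-M_u}\, \mathcal U^* e^{-M_v} \mathcal U\right)
= \iint_{X\times Y} e^{-u(x)}\,|\mathcal U(y,x)|^2\, e^{-v(y)}\,d\mu(x)\,d\nu(y),
\end{equation*}
where the last equality comes from writing the trace with the integral kernel $\mathcal U(y,x)$. Bounding the kernel pointwise by $|\mathcal U(y,x)|^2 \leq \|\mathcal U\|_{L^1\to L^\infty}^2$ factorizes the integral, so
\begin{equation*}
\ln\tr e^{-H} \leq 2\ln\|\mathcal U\|_{L^1\to L^\infty} + \ln\int_X e^{-u}\,d\mu + \ln\int_Y e^{-v}\,d\nu .
\end{equation*}

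Combining the two bounds, for every (say, bounded) real pair $(u,v)$ one obtains
\begin{equation*}
-\tr\gamma\ln\gamma \leq \int_X u\,\rho_\gamma\,d\mu + \int_Y v\,\rho_{\widehat\gamma}\,d\nu + 2\ln\|\mathcal U\|_{L^1\to L^\infty} + \ln\int_X e^{-u}\,d\mu + \ln\int_Y e^{-v}\,d\nu .
\end{equation*}
The concluding step is the saturating choice $u = -\ln\rho_\gamma$ and $v = -\ln\rho_{\widehat\gamma}$: by the normalizations \eqref{eq:norm} and \eqref{eq:normy} each of $\int_X e^{-u}\,d\mu$ and $\int_Y e^{-v}\,d\nu$ equals $1$, so the two logarithmic terms vanish, and the energy terms become exactly the classical entropies $-\int_X\rho_\gamma\ln\rho_\gamma\,d\mu$ and $-\int_Y\rho_{\widehat\gamma}\ln\rho_{\widehat\gamma}\,d\nu$. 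Rearranging yields \eqref{eq:main3}.

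The main obstacle I anticipate is not the algebra above but its rigorous justification: the optimal trial functions $u=-\ln\rho_\gamma$, $v=-\ln\rho_{\widehat\gamma}$ are generally unbounded and indeed take the value $+\infty$ wherever the densities vanish, so $H$ need not be a well-behaved operator and both the Gibbs principle and Golden--Thompson would have to be applied with care. I would therefore first establish the intermediate inequality for bounded $u,v$ (where $H$ is bounded and every trace is manifestly finite), and then pass to the singular choice by an approximation and monotone-convergence argument, using the hypotheses $\int_X\rho_\gamma\ln_+\rho_\gamma\,d\mu<\infty$ and $\int_Y\rho_{\widehat\gamma}\ln_+\rho_{\widehat\gamma}\,d\nu<\infty$ to ensure that the classical entropies are well defined in $(-\infty,+\infty]$ and that the limiting energy terms converge as required.
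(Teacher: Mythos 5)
Your proposal is correct and follows essentially the same route as the paper: the same two lemmas (Gibbs variational principle and Golden--Thompson), the same Hamiltonian $H=-\ln\rho_\gamma-\mathcal U^*\ln\rho_{\widehat\gamma}\,\mathcal U$ (which the paper writes down directly rather than optimizing over general $u,v$), and the same Hilbert--Schmidt/kernel computation bounding $\tr e^{-H}$ by $\|\mathcal U\|_{L^1\to L^\infty}^2$. Your closing remark about regularizing the unbounded trial functions is a reasonable point of care that the paper glosses over.
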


We illustrate this theorem by some \textbf{examples}.
\begin{enumerate}
 \item If $X=Y=\R^d$ with Lebesgue measure and $\mathcal U$
the Fourier transform (i.e., $\mathcal U(k,x)=e^{-2\pi i
k\cdot x}$), then we recover Theorem \ref{main}. In this
case, $-2\ln\|\mathcal U\|_{L^1\to L^\infty} =0$.
\smallskip
\item Let $X=(-L/2,L/2)$ with Lebesgue measure, $Y=L^{-1}
\Z$ with $L^{-1}$ times counting measure and let $\mathcal U$ be the
discrete Fourier transform, that is, $\mathcal U(k,x) =
e^{-2\pi i k x}$. Then \eqref{eq:main3} holds
with $-2\ln\|\mathcal U\|_{L^1\to L^\infty} =0$.
\smallskip
\item Let $X=Y=\Z/N\Z=\{0,1,\ldots,N-1\}$ for some $N\in\N$
with counting measure and let $\mathcal U(k,n)=N^{-1/2}
e^{-i2\pi kn/N}$. Then
\eqref{eq:main3} holds with $-2\ln\|\mathcal U\|_{L^1\to
L^\infty} =\ln N$.
\smallskip
\item The following is a generalization of Example (3) and
is related to \cite{De,Kr,MaUf}. Let $(|a_j\rangle)_j$ and
$(|b_k\rangle)_k$ two orthonormal bases in a separable
Hilbert space $\mathcal H$ and put
$$
c=\sup_{j,k} |\langle a_j|b_k\rangle| \,.
$$
By the Schwarz inequality, $0<c\leq 1$. Let $\gamma\geq 0$
be an operator on $\mathcal H$ with $\tr\gamma=1$. Define
$$
p_j := \langle a_j|\gamma|a_j \rangle \,,
\qquad
q_k := \langle b_k|\gamma|b_k \rangle \,.
$$
Then
\begin{equation}
 \label{eq:mauf}
 -\sum_j p_j \ln p_j - \sum_k q_k \ln q_k \geq -
\tr\gamma\ln\gamma -2 \ln c \,,
\end{equation}
which follows from Theorem \ref{main3} by noting that, if
the change of bases is denoted by $\mathcal U$, then
$\|\,\mathcal U\|_{L^1\to L^\infty} = c$. The weaker
inequality without the term $\tr\gamma\ln\gamma$ on the
right side was shown in \cite{MaUf} with a different
proof.\\
In passing, we note that each of
the entropies on the left side of \eqref{eq:mauf} is
greater than or equal to $-\tr\gamma\ln\gamma$. This
follows from the concavity of $-p\ln p$, the fact (derived
from the variational principle) that the sequence $(p_j)$ is
majorized by the sequence of eigenvalues of $\gamma$, and
Karamata's theorem (see, e.g., \cite{HaLiPo} or \cite[Rem.
4.7]{LiSe}).
\end{enumerate}


\section{Proof of Theorem \ref{main3}}

Our proof is based on the following two well known lemmas in
quantum statistical mechanics; see, e.g., \cite{Ca,Si}.

\begin{lemma}[Gibbs variational principle]\label{gibbs}
Let $H$ be a self-adjoint operator such that $e^{- H}$ is
trace class. Then for any $\gamma\geq 0$ with $\tr\gamma=1$,
\begin{align*}
\tr \gamma H + \tr\gamma\ln\gamma
\geq - \ln\tr e^{-H}
\end{align*}
with equality iff $\gamma= \exp(- H)/\tr\exp(- H)$.
\end{lemma}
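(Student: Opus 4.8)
The plan is to reduce the inequality to the nonnegativity of the relative entropy of $\gamma$ with respect to the Gibbs state. Set $Z:=\tr e^{-H}$ and $\rho_0 := e^{-H}/Z$, a genuine density matrix with strictly positive eigenvalues. Since $\rho_0 = e^{-H-(\ln Z)I}$, we have $\ln\rho_0 = -H-(\ln Z)\,I$, and multiplying by $\gamma$ and taking the trace (using $\tr\gamma=1$) gives $-\tr\gamma\ln\rho_0 = \tr\gamma H + \ln Z$. Hence the asserted bound $\tr\gamma H + \tr\gamma\ln\gamma \geq -\ln Z$ is exactly the statement that
$$
S(\gamma\|\rho_0) := \tr\gamma\ln\gamma - \tr\gamma\ln\rho_0 \geq 0 ,
$$
with equality in the Lemma corresponding to $S(\gamma\|\rho_0)=0$.

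To prove $S(\gamma\|\rho_0)\geq 0$ I would diagonalize both operators, writing $\gamma=\sum_i p_i|\phi_i\rangle\langle\phi_i|$ and $\rho_0=\sum_j r_j|\psi_j\rangle\langle\psi_j|$ with $\{\phi_i\},\{\psi_j\}$ orthonormal and $r_j>0$. Setting $c_{ij}:=|\langle\phi_i|\psi_j\rangle|^2$, the matrix $(c_{ij})$ is doubly stochastic, $\sum_j c_{ij}=\sum_i c_{ij}=1$, and a direct computation of the two traces yields
$$
S(\gamma\|\rho_0) = \sum_{i,j} c_{ij}\,p_i\bigl(\ln p_i - \ln r_j\bigr) .
$$
I would then invoke the elementary scalar inequality $a\ln(a/b)\geq a-b$ for $a,b>0$ (immediate from $\ln t\leq t-1$), which holds with equality iff $a=b$. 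Applying it termwise with $a=p_i$, $b=r_j$ and collapsing the sums via double stochasticity gives
$$
S(\gamma\|\rho_0) \geq \sum_{i,j} c_{ij}(p_i-r_j) = \sum_i p_i - \sum_j r_j = 1-1 = 0 .
$$

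For the equality statement, equality in the scalar inequality forces $p_i=r_j$ whenever $c_{ij}>0$; tracking the overlaps then shows $\gamma$ and $\rho_0$ have common eigenvectors with equal eigenvalues, i.e. $\gamma=\rho_0=e^{-H}/\tr e^{-H}$, which is precisely the claimed equality case. The step I expect to require the most care is the functional-analytic bookkeeping rather than the algebra: because $H$ is unbounded, $\tr\gamma H$ must be read as $\sum_i p_i\langle\phi_i|H|\phi_i\rangle$ and may be $+\infty$, in which case the Lemma is trivial; moreover $\tr\gamma\ln\gamma$ may equal $-\infty$ and the interchange of traces with the double sum over $c_{ij}$ needs justification. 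The clean way around this is to observe that $-\tr\gamma\ln\rho_0=\sum_{i,j}c_{ij}\,p_i(-\ln r_j)$ is a sum of nonnegative terms, hence unambiguously defined in $[0,+\infty]$, so that $S(\gamma\|\rho_0)$ makes sense and the termwise estimate above is legitimate by monotone convergence.
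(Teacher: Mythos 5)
The paper does not prove this lemma at all: it is quoted as a standard fact of quantum statistical mechanics with references to \cite{Ca} and \cite{Si}, so there is no in-paper argument to compare against. Your proof is the standard one that those references give, namely rewriting the claim as nonnegativity of the relative entropy $S(\gamma\|\rho_0)$ with $\rho_0=e^{-H}/\tr e^{-H}$ (Klein's inequality), and it is correct. The reduction $-\tr\gamma\ln\rho_0=\tr\gamma H+\ln Z$, the doubly stochastic matrix $c_{ij}=|\langle\phi_i|\psi_j\rangle|^2$ (legitimate because $e^{-H}$ is injective, so the $\psi_j$ form a complete basis, and the $\phi_i$ can be completed through $\ker\gamma$), the termwise use of $a\ln(a/b)\ge a-b$, and the equality analysis ($p_i=r_j$ whenever $c_{ij}>0$, whence each $\phi_i$ lies in an eigenspace of $\rho_0$ with eigenvalue $p_i$, so $\gamma=\rho_0$) all go through. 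Your closing remarks on the convergence bookkeeping are the right ones: $-\ln r_j\ge 0$ because $\sum_j r_j=1$, so $-\tr\gamma\ln\rho_0$ is an unambiguous element of $[0,+\infty]$, and the two sums $\sum_{i,j}c_{ij}p_i$ and $\sum_{i,j}c_{ij}r_j$ converge absolutely, so subtracting $\sum_{i,j}c_{ij}(p_i-r_j)=0$ from the termwise-nonnegative sum is legitimate. The one caveat worth stating explicitly is that when $\tr\gamma H=+\infty$ and simultaneously $\tr\gamma\ln\gamma=-\infty$ the left side of the lemma is not defined; in the paper's application this is excluded by the hypotheses $\int\rho\ln_+\rho\,d\mu<\infty$, which force the relevant quantities to be finite where needed.
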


\begin{lemma}[Golden-Thompson inequality]\label{gt}
For self-adjoint operators $A$ and $B$,
$$
\tr e^{A+B} \leq \tr e^{A/2} e^B e^{A/2} \,.
$$
\end{lemma}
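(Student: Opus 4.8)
The final statement to be proved is the Golden--Thompson inequality (Lemma~\ref{gt}): for self-adjoint $A$ and $B$, $\tr e^{A+B}\leq \tr e^{A/2}e^B e^{A/2}$. The strategy I would pursue is the one based on the Lie product formula together with repeated use of a trace inequality for positive operators. The starting point is the symmetric Trotter formula
\begin{equation*}
e^{A+B} = \lim_{n\to\infty} \left( e^{A/2n} e^{B/n} e^{A/2n} \right)^n \,,
\end{equation*}
which holds in operator norm for bounded self-adjoint $A,B$ (and reduces the problem, after a cutoff/limiting argument, to the bounded case). Since $X\mapsto \tr X$ is continuous on finite-dimensional spaces and, more generally, trace norm convergence of the relevant positive operators passes to the trace, it suffices to bound $\tr\left(e^{A/2n}e^{B/n}e^{A/2n}\right)^n$ uniformly in $n$ by $\tr e^{A/2}e^B e^{A/2}$ and then let $n\to\infty$.

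\textbf{Key steps.}
First I would establish the basic trace monotonicity lemma: if $P$ and $Q$ are positive semidefinite operators, then $\tr (PQ)^{2^m}\le \tr P^{2^m}Q^{2^m}$ for every nonnegative integer $m$. The case $m=1$, namely $\tr (PQ)^2\le \tr P^2 Q^2$, follows from writing $\tr(PQ)^2 = \tr(P^{1/2}QP^{1/2})(P^{1/2}QP^{1/2})\ge 0$ and applying Cauchy--Schwarz for the Hilbert--Schmidt inner product together with cyclicity of the trace. The general $m$ is then obtained by induction: setting $P'=P^2$, $Q'=Q^2$ and using $\tr(PQ)^{2^m}=\tr\bigl((PQ)^2\bigr)^{2^{m-1}}\le \tr\bigl(P^2Q^2\bigr)^{2^{m-1}}=\tr(P'Q')^{2^{m-1}}\le \tr P'^{2^{m-1}}Q'^{2^{m-1}}$, which telescopes to the claim. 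Second, I would apply this with $P=e^{A/2}$ and $Q=e^{B}$, but more efficiently I would exploit the scaling freedom: for $n=2^m$ a power of two, set $P=e^{A/2n}$, $Q=e^{B/n}$, and observe that $\left(e^{A/2n}e^{B/n}e^{A/2n}\right)^n$ and $\left(e^{A/2n}e^{B/n}\right)^n e^{\cdots}$ differ only by a cyclic rotation inside the trace, so that
\begin{equation*}
\tr\left(e^{A/2n}e^{B/n}e^{A/2n}\right)^n = \tr\left(e^{A/n}e^{B/n}\right)^n = \tr(PQ)^n \,,
\end{equation*}
with $P=e^{A/n}$, $Q=e^{B/n}$. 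Then $\tr(PQ)^n = \tr(PQ)^{2^m}\le \tr P^{2^m}Q^{2^m}=\tr e^{A}e^{B}=\tr e^{A/2}e^B e^{A/2}$, the last equality again by cyclicity. Thus for the subsequence $n=2^m\to\infty$ we obtain $\tr\left(e^{A/2n}e^{B/n}e^{A/2n}\right)^n\le \tr e^{A/2}e^B e^{A/2}$, and passing to the limit along this subsequence gives the result.

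\textbf{Main obstacle.}
The genuinely delicate point is the passage to the limit and the reduction from the general trace-class setting to the bounded case. For bounded $A,B$ everything is finite-dimensional in spirit and the Trotter convergence in operator norm, combined with the fact that the operators involved are uniformly trace-class bounded, lets the trace pass through the limit. The care is needed because $e^{A+B}$ need only be trace class under hypotheses like $e^{-H}$ trace class in Lemma~\ref{gibbs}; to handle unbounded $A,B$ I would approximate by the spectral truncations $A_N=A\,\mathbf{1}(|A|\le N)$, $B_N=B\,\mathbf{1}(|B|\le N)$, prove the inequality for each $N$, and then invoke monotone/dominated convergence for traces of the positive operators $e^{A_N/2}e^{B_N}e^{A_N/2}$ and lower semicontinuity of $\tr e^{A_N+B_N}$ as $N\to\infty$. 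Verifying that these limits behave correctly --- in particular that no trace mass is lost in the limit $e^{A+B}$ --- is the part requiring the most technical attention; everything else is the elementary Cauchy--Schwarz induction above.
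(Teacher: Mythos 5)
The paper never proves Lemma~\ref{gt}: it is invoked as a well-known fact with a citation to \cite{Ca,Si}, so your attempt must be measured against the classical proof found there --- which is exactly the strategy you chose (symmetric Trotter formula along $n=2^m$, plus a power-trace inequality for positive operators). Your architecture, your base case $\tr(PQ)^2\le\tr P^2Q^2$ via Hilbert--Schmidt Cauchy--Schwarz, and the cyclicity identity $\tr\bigl(e^{A/2n}e^{B/n}e^{A/2n}\bigr)^n=\tr\bigl(e^{A/n}e^{B/n}\bigr)^n$ are all correct. But your inductive step contains a genuine gap: the first inequality in your chain,
$$
\tr\bigl((PQ)^2\bigr)^{2^{m-1}}\le\tr\bigl(P^2Q^2\bigr)^{2^{m-1}}\,,
$$
is neither the base case (which controls only the trace of the \emph{square}) nor an instance of the induction hypothesis (which applies to a product $P'Q'$ of two \emph{positive} operators, whereas here you compare equal powers of the two non-self-adjoint operators $(PQ)^2=PQPQ$ and $P^2Q^2$). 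There is no operator ordering between $PQPQ$ and $P^2Q^2$, and trace comparisons do not propagate to higher powers of non-self-adjoint operators, so this step cannot be waved through. It is in fact the whole content of the lemma, in the form $\tr X^{2k}\le\tr(X^*X)^k$ with $X=PQ$ and $k=2^{m-1}$; note $\tr(X^*X)^k=\tr(QP^2Q)^k=\tr(P^2Q^2)^k$ by cyclicity, and Cauchy--Schwarz alone delivers this only for $k=1$.

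The standard repair is Weyl's majorant theorem: the eigenvalues $\lambda_j$ of $PQ$ are nonnegative (since $PQ$ has the same spectrum as $P^{1/2}QP^{1/2}\ge0$) and are log-majorized by the singular values $s_j$, whence $\sum_j\lambda_j^{2k}\le\sum_j s_j^{2k}$, i.e.\ $\tr(PQ)^{2k}\le\tr(QP^2Q)^k=\tr(P^2Q^2)^k$ for every $k$ --- an Araki--Lieb--Thirring-type step. With this supplied, your telescoping chain $\tr(PQ)^{2^m}\le\tr(P^2Q^2)^{2^{m-1}}\le\cdots\le\tr P^{2^m}Q^{2^m}$ is valid, and the rest of your argument goes through: along $n=2^m$ the Trotter products converge in norm, hence weakly, and lower semicontinuity of the trace on nonnegative operators (Fatou for traces) gives $\tr e^{A+B}\le\liminf_m\tr\bigl(e^{A/2n}e^{B/n}e^{A/2n}\bigr)^n\le\tr e^{A/2}e^Be^{A/2}$. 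One smaller caveat: for bounded $A,B$ on an infinite-dimensional space both sides are $+\infty$, so your ``bounded case'' should really be the finite-dimensional (or finite spectral truncation) case; after that correction, the limiting scheme you sketch is the standard one from \cite{Si}.
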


\begin{proof}[Proof of Theorem \ref{main3}]
We first note that
$$
- \int \rho_\gamma(x)\ln\rho_\gamma(x) \,d\mu(x)
- \int \rho_{\widehat\gamma}(y)\ln\rho_{\widehat\gamma}(y)
\,d\nu(y)
= \tr \gamma H
$$
with the operator $H=-\ln\rho_\gamma 
-\mathcal U^* \ln\rho_{\widehat\gamma} \,\mathcal U$ in
$L^2(X)$. Here, $\ln\rho_\gamma$ and
$\ln\rho_{\widehat\gamma}$ are considered as multiplication
operators, and we used the fact that $\tr_{L^2(X)} \mathcal
U^* A \,\mathcal U = \tr_{L^2(Y)} A$. By Lemmas \ref{gibbs}
and \ref{gt},
\begin{align*}
- \! \int \rho_\gamma(x)\ln\rho_\gamma(x) \,d\mu(x) 
- \! \int
\rho_{\widehat\gamma}(y)\ln\rho_{\widehat\gamma}(y)\,d\nu(y)
+ \tr\gamma\ln\gamma 
& \geq -\ln\tr e^{-H} \\
& \geq -\ln \tr \rho_\gamma^{1/2} \mathcal U^*
\rho_{\widehat\gamma} \,\mathcal U \rho_\gamma^{1/2} \,\,.
\end{align*}
The trace on the right side is the square of the
Hilbert-Schmidt norm of the operator
$\rho_{\widehat\gamma}^{1/2} \,\mathcal U
\rho_\gamma^{1/2}$, which has kernel
$$
\rho_{\widehat\gamma}(y)^{1/2} \,\mathcal U(y,x)
\rho_\gamma(x)^{1/2} \,.
$$
Thus,
\begin{align*}
 \tr \rho_\gamma^{1/2} \mathcal U^*
\rho_{\widehat\gamma} \,\mathcal U \rho_\gamma^{1/2}
& = \iint_{X\times Y} \rho_{\widehat\gamma}(y)
|\,\mathcal U(y,x)|^2 \rho_\gamma(x) \,d\mu(x)\,d\nu(y) \\
& \leq \|\mathcal U\|_{L^1\to L^\infty}^2 
\int_Y \rho_{\widehat\gamma}(y) \,d\nu(y)\
\int_X \rho_\gamma(x) \,d\mu(x) \,.
\end{align*}
By \eqref{eq:norm} and \eqref{eq:normy}, this equals
$\|\mathcal U\|_{L^1\to L^\infty}^2$, and the proof of the
theorem is complete.
\end{proof}



\bibliographystyle{amsalpha}

\begin{thebibliography}{HaLiPo}

\bibitem[Be]{Be} W. Beckner, \textit{Inequalities in Fourier
analysis}. Ann. of Math. (2) \textbf{102} (1975), no. 1,
159--182.
\bibitem[Ca]{Ca} E. A. Carlen, \textit{Trace inequalities
and quantum entropy. An introductory course}. In: Entropy
and the quantum, R. Sims and D. Ueltschi (eds.), 73--140,
Contemp. Math. \textbf{529}, Amer. Math. Soc., Providence,
RI, 2010.
\bibitem[De]{De} D. Deutsch, \textit{Uncertainty in quantum
measurements}. Phys. Rev. Lett. \textbf{50} (1983),
631--633.
\bibitem[HaLiPo]{HaLiPo} G. H. Hardy, J. E. Littlewood, G.
P\'olya, \textit{Inequalities}. Cambridge University Press,
Cambridge, 1952.
\bibitem[Hi]{Hi} I. I. Hirschman Jr., \textit{A note on
entropy}. Amer. J. Math. \textbf{79} (1957), 152--156.
\bibitem[Kr]{Kr} K. Kraus, \textit{Complementary observables
and uncertainty relations}. Phys. Rev. D \textbf{35} (1987),
3070--3075.
\bibitem[LiSe]{LiSe} E. H. Lieb, R. Seiringer, \textit{The
stability of matter in quantum mechanics}. Cambridge
University Press, Cambridge, 2010. 
\bibitem[MaUf]{MaUf} H. Maassen, J. B. M. Uffink,
\textit{Generalized entropic
uncertainty relations}. Phys. Rev. Lett. \textbf{60} (1988), no. 12, p.
1103--1106.
\bibitem[Ru]{Ru} M. Rumin, \textit{Balanced distribution-energy inequalities
and related entropy bounds}. Preprint (2010), arxiv:1008.1674
\bibitem[Si]{Si} B. Simon, \textit{Trace ideals and their
applications}. Second edition. Amer. Math. Soc., Providence,
RI, 2005.

\end{thebibliography}

\end{document}